\theoremstyle{plain}
\newtheorem{theorem}{Theorem}[section]
\newtheorem{corollary}[theorem]{Corollary}
\newtheorem{lemma}[theorem]{Lemma}
\theoremstyle{definition}
\newtheorem{example}[theorem]{Example}
\newtheorem{definition}[theorem]{Definition}
\newcommand\nat{\mathbb N}
\newcommand\real{{\mathbb{R}}}
\newcommand\rd{{\mathbb{R}^d}}
\newcommand{\omu}[3]{\overset{\textrm{#1}}{\underset{\textrm{#3}}{#2}}}
\newcommand{\entier}{\mathop{\mathrm{int}}}
\newcommand\Ascr{\mathscr{A}}
\newcommand\Bscr{\mathscr{B}}
\newcommand\Fscr{\mathscr{F}}
\newcommand\Iscr{\mathscr{J}}
\newcommand\Pscr{\mathscr{P}}
\begin{document}\allowdisplaybreaks
\title{{\bfseries Continuity Assumptions in Cake-Cutting}}

\author[tud]{Ren\'e L.\ Schilling\corref{cor1}}
\ead{rene.schilling@tu-dresden.de}
\address[tud]{Institut f\"{u}r Mathematische Stochastik, Fachrichtung Mathematik,\\TU Dresden, 01062 Dresden, Germany}
\cortext[cor1]{Corresponding author}

\author[tuba]{Dietrich Stoyan}
\ead{stoyan@math.tu-freiberg.de}
\address[tuba]{Institut f\"{u}r Stochastik, Fakult\"{a}t f\"{u}r Mathematik und Informatik,\\TU Berg\-akademie Freiberg, Pr\"{u}ferstra{\ss}e 9, 09596 Freiberg, Germany}
%\thanks{}

%\subjclass[2010]{}

%\date{\today}

\begin{abstract}
In important papers on cake-cutting -- one of the key areas in fair division and resource allocation -- the measure-theoretical fundamentals are not
fully correctly given. It is not clear (i) which family of sets should be taken for the pieces of cake, (ii) which set-functions should be used for evaluating the pieces, and (iii) which is the relationship between various continuity properties appearing in cake-cutting.

We show that probably the best choice for the familiy
of subsets of $[0,1]$ is the Borel $\sigma$-algebra and
for the set-function any `sliceable' Borel measure. At least in dimension one it does not make sense
to work with only finitely additive contents on finite unions of intervals. For the continuity
property we see two possibilities. The weaker is the traditional divisibility property, which is equivalent to being atom-free.
The stronger is simply absolute continuity with respect to Lebesgue measure. We also consider the case of a base set
(cake or pie) more general than $[0,1]$.
\end{abstract}
\begin{keyword}
    Cake-cutting \sep
    finitely additive measure \sep
    %content \sep
    countably additive measure \sep
    non-atomic \sep
    sliceable \sep
    measurability

    \emph{JEL classification:} C7 %\sep D7 %C0 \sep C6 \sep D7

    \MSC[2010] 28A12 %\sep 91B08
    \sep 91B32
\end{keyword}

\maketitle

\section{Introduction}\label{intro}
Cake-cutting `is a metaphor for dividing an infinitely divisible resource (or, good)
amongst several players (or agents). As everyone knows, people's preferences do not
only differ with respect to taste but to all kind of things. Hence, the problem of
fair division of a heterogeneous divisible resource also applies to areas such as
economics \& law (e.g., the division of a property in a divorce or inheritance case),
science \& technology (e.g., the division of computing time among several users
sharing a single computer, or the division of bandwidth when sharing a network),
and even politics' \cite[p.~395]{rothe}.

Cake-cutting is a key area in fair division and resource allocation which offers many
deep mathematical problems with admirable solutions see, e.g., \cite{steinhaus48},
\cite{bra-tay96} and \cite{rothe}.

In contrast to the rigorous treatment of cake-cutting problems by
high-level mathematics, a subtle point at the very beginning of
problem description is in many papers considered with little care.
Two passages from recent papers may show this.

\cite{aziz} write in their very deep and important paper: `We consider a cake which is represented by the interval $[0,1]$. A \emph{piece of cake} is a
finite union of disjoint subsets of $[0,1]$. We will assume the standard assumptions in cake cutting. Each agent in the set of agents $N=\{1, \dots, n\}$ has his own valuation over subsets of interval $[0,1]$. The valuations are (i) \emph{defined on all finite unions of the intervals}; (ii) non-negative: $V_i(X) \ge 0$ for all $X \subseteq [0,1]$; (iii) \emph{additive}: for all disjoint $X, X' \subseteq [0,1], V_i(X \cup X')=V_i(X) + V_i(X')$; (iv) \emph{divisible} i.e., for every $X\subseteq [0,1]$ and $0 \le \lambda \le 1$, there exists $X' \subseteq X$ with $V_i(X') = \lambda V_i(X)$.'

The term \emph{finite union of disjoint subsets} is simply a
set-theoretic blunder. The term \emph{finite union of intervals}
causes questions to be discussed when combined with valuations,
which are set-functions with properties as measures or contents.
Such functions (unless trivial) cannot be defined on the system `of
all subsets' of a given uncountable set, not even for $[0,1]$.

\cite{bra-jon-kla11} write: `A cake is a one-dimensional
heterogeneous good, represented by the unit interval $[0,1]$. Each
of $n$ players has a personal value function over the cake,
characterised by a probability density function with a continuous
cumulative distribution function. This implies that players'
preferences are finitely additive and nonatomic.'

This text is closer to measure theory. However, a probability
distribution with a density function (an absolutely continuous
measure) has of course a (cumulative) distribution function\footnote{Throughout we use the terms \emph{distribution function} and \emph{cumulative distribution function} synonymously.} which is
continuous -- but the continuity of a distribution function does not
imply the existence of a density function.

We emphasise that these inaccuracies do, by no means, influence the results in these papers.\footnote{Note however, that imprecise language may lead to serious mistakes as shown in \cite{hil-mor09,hil-mor10}.} We only aim to discuss this small but subtle blemish, and to show how the measure-theoretic fundamentals of cake-cutting should be correctly formulated.

When comparing the two quotations above, three questions arise:
\begin{enumerate}
\item
    Which subsets of $[0,1]$ should be considered? Finite unions of intervals or more general sets?
\item
    If valuations are considered as set-functions studied in measure theory, should they be countably additive measures or finitely additive contents?
\item
    Which relationship is bet\-ween the two continuity assumptions above, between divisibility and the existence of a density function?
\end{enumerate}

We come to clear answers on these questions and a recommendation how
to formulate the fundamentals of cake-cutting.

\section{Measure theory in a nutshell}\label{meas}

This section introduces basic concepts of (abstract) measure theory. For an easily accessible full account we refer to \cite{bauer}, \cite{cohn} or \cite{schilling-mims,schilling-mint}. Let $X$ be any abstract set and consider a function $\mu:\Ascr\to [0,\infty]$ defined on a family $\Ascr$ of subsets of $X$. Motivated by the notions `length', `area', `volume', or `number of elements' in a set etc., we call $\mu$ \emph{finitely additive} if
\begin{equation}
    \mu(A\cup B) = \mu(A)+\mu(B)
\end{equation}
holds for all disjoint $A,B\in\Ascr$ such that $A\cup B\in \Ascr$.
This easily extends to finitely many mutually disjoint sets.
%$A_1,\dots, A_n$.
For countably many sets we have to require the
so-called \emph{$\sigma$-additivity}
\begin{equation}
    \mu\Big(\bigcup_{n=1}^\infty A_n\Big) = \sum_{n=1}^\infty \mu(A_n)
\end{equation}
for all sequences of mutually disjoint sets $(A_n)_{n\in\nat}\subset\Ascr$ such that $\bigcup_{n=1}^\infty A_n\in \Ascr$.

If $\emptyset\in\Ascr$ and $\mu(\emptyset)<\infty$, then finite additivity (or $\sigma$-additivity) implies that
$\mu(\emptyset)=0$.
%Indeed, $$ 0\leq \mu(\emptyset)=\mu(\emptyset\cup\emptyset)=\mu(\emptyset)+\mu(\emptyset)<\infty. $$
It is quite cumbersome to require in each instance the stability of $\Ascr$ under finitely many or countably many unions. Therefore, it is usual to consider set-functions on algebras and $\sigma$-algebras of sets. An \emph{algebra} $\Ascr$ of subsets of $X$ is a family satisfying
\begin{equation}
\begin{gathered}
    \emptyset\in \Ascr,\quad
    A\in\Ascr \implies A^c = X\setminus A\in \Ascr,\\
    A,B\in\Ascr \implies A\cup B\in\Ascr,
\end{gathered}
\end{equation}
while a $\sigma$-algebra satisfies
\begin{equation}
\begin{gathered}
    \emptyset\in \Ascr,\quad
    A\in\Ascr \implies A^c = X\setminus A\in \Ascr,\\
    (A_n)_{n\in\nat}\subset \Ascr \implies \bigcup_{n=1}^\infty A_n\in\Ascr.
\end{gathered}
\end{equation}
These definitions allow to perform any of the usual operations with sets -- union,
intersection, set-theoretic difference, formation of complements -- finitely often
(in the case of an algebra) or countably often (in the case of a $\sigma$-algebra).

\begin{definition}
(i) A \emph{content} is a finitely additive set-function on an algebra such that $\mu(\emptyset)=0$.

(ii) A \emph{measure} is a $\sigma$-additive set-function on a $\sigma$-algebra such that $\mu(\emptyset)=0$.
\end{definition}
Contents and measures are \emph{subtractive}, i.e.\ if $A\subset B$, then $\mu(B\setminus A)=\mu(B)-\mu(A)$; in particular they are monotone, i.e.\ $\mu(A)\leq\mu(B)$ whenever $A\subset B$; $\sigma$-additivity is equivalent to finite additivity plus \emph{continuity from below}, i.e.\ if $B_1\subset B_2\subset \dots$ is a chain of countably many sets from $\Ascr$, then we have $\mu\big(\bigcup_{n=1}^\infty B_n\big) = \sup_{n\in\nat}\mu(B_n)$.

\begin{example}\label{example-1}
Here are a few typical examples of measures.

(i) \emph{$d$-dimensional Lebesgue measure} is defined on the Borel $\sigma$-algebra on $\rd$; this is the smallest $\sigma$-algebra which contains the $d$-dimensional `rectangles' of the form
$$
    (a_1,b_1] \times \dots \times (a_d,b_d],\quad -\infty < a_i<b_i<\infty.
$$ It is the unique measure which corresponds to the familiar geometric notions of length (if $d=1$), area (if $d=2$), volume (if $d=3$) etc.

(ii) The \emph{Dirac measure} (or \emph{point mass}) $\delta_{x_0}$
can be defined on the power set $\Pscr(X) = \{A : A\subset X\}$ of
any abstract space $X$. It is an indicator as to whether a fixed
point $x_0\in X$ is in a set $A$: $\delta_{x_0}(A) = 1$ or $0$
depending on whether $x_0\in A$ or $x_0\notin A$.

(iii) The \emph{counting measure} can be defined on the power set $\Pscr(X) = \{A : A\subset X\}$ of any abstract space $X$. Its value
is the number of elements of $A\subset X$: $\mu(A) = \# A$. Formally, it can be written as a (perhaps uncountable) sum $\sum_{x\in X}\delta_x$.
\end{example}

Despite its innocuous definition, a $\sigma$-algebra can become quite large and complicated if the base set $X$ is non-discrete. This means that we have, in general, no possibility to assign explicitly a measure to \emph{every} set from a $\sigma$-algebra. One way out is Carath\'eodory's `automatic extension theorem' \cite[Theorem 6.1]{schilling-mims}.

\begin{theorem}\label{th-cara}
    Assume that $\mu$ is a finite \textup{(}i.e.\ $\mu(X)<\infty$\textup{)} content on an algebra $\Ascr$. Then there is a unique measure $\bar\mu$ which is defined on the smallest $\sigma$-algebra containing $\Ascr$ and which coincides with $\mu$ on $\Ascr$.
\end{theorem}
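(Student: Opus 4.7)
The plan is to follow the classical two-step strategy: existence via Carath\'eodory's outer-measure construction, uniqueness via a Dynkin $\pi$--$\lambda$ argument. A preliminary caveat I would flag is that the conclusion is really only available when $\mu$ is already $\sigma$-additive on $\Ascr$ (whenever a countable disjoint union of elements of $\Ascr$ happens to lie in $\Ascr$), i.e.\ when $\mu$ is a \emph{pre-measure}; for finite contents this is equivalent to continuity from above at $\emptyset$, so I would either take it as part of the hypothesis or verify it before anything else. Without it, counter\-examples exist (e.g.\ a density on the algebra of finite-or-cofinite subsets of $\nat$), so this is a genuine assumption that must be made explicit.

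For existence I would define, for every $E\subset X$, the outer measure
\[
    \mu^*(E) := \inf\left\{\sum_{n=1}^\infty \mu(A_n) : A_n\in\Ascr,\; E\subset\bigcup_{n=1}^\infty A_n\right\},
\]
and then verify in order: (a) $\mu^*$ is an outer measure on $\Pscr(X)$, the only non-trivial point being countable subadditivity, obtained by the standard $\varepsilon 2^{-n}$ covering trick; (b) $\mu^*=\mu$ on $\Ascr$, where the inequality $\mu(A)\le\mu^*(A)$ is the single step in which the pre-measure hypothesis is really used; (c) the Carath\'eodory class
\[
    \mathscr{M} := \{E\subset X : \mu^*(T)=\mu^*(T\cap E)+\mu^*(T\cap E^c)\text{ for all }T\subset X\}
\]
is a $\sigma$-algebra on which $\mu^*$ is countably additive; (d) $\Ascr\subset\mathscr{M}$, which follows directly from the definition of $\mu^*$ because each $A\in\Ascr$ splits any $\Ascr$-cover of a test set. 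Putting these together, $\sigma(\Ascr)\subset\mathscr{M}$, so $\bar\mu:=\mu^*|_{\sigma(\Ascr)}$ is a measure extending $\mu$.

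For uniqueness, suppose $\bar\mu_1,\bar\mu_2$ are two measures on $\sigma(\Ascr)$ both agreeing with $\mu$ on $\Ascr$. Because $\Ascr$ is an algebra, it is in particular $\cap$-stable, and the collection
\[
    \Dscr := \{B\in\sigma(\Ascr) : \bar\mu_1(B)=\bar\mu_2(B)\}
\]
contains $\Ascr$ and is a Dynkin system; closure under proper differences uses the subtractivity noted earlier in the excerpt, which is where the finiteness $\mu(X)<\infty$ enters essentially. Dynkin's $\pi$--$\lambda$ theorem then yields $\Dscr=\sigma(\Ascr)$, hence $\bar\mu_1=\bar\mu_2$.

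The main obstacle is step (c): showing that $\mathscr{M}$ is a $\sigma$-algebra on which $\mu^*$ is $\sigma$-additive. This is the technical core of the Carath\'eodory construction and the only piece that is not essentially formal; the remaining steps are bookkeeping, once the pre-measure property of $\mu$ on $\Ascr$ is granted.
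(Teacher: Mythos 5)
The paper gives no proof of this theorem at all --- it is quoted from the literature as \cite[Theorem 6.1]{schilling-mims} --- and the argument given there is exactly the one you outline: Carath\'eodory's outer-measure construction for existence, and a Dynkin-system argument (using the $\cap$-stability of $\Ascr$ and the finiteness of $\mu$) for uniqueness. So your route is the standard one and, as a proof of the correctly stated theorem, it is complete in outline. Your preliminary caveat deserves emphasis, because it is not a stylistic quibble: with ``content'' meaning only a finitely additive set-function as in the paper's Definition~2.1(i), the theorem as printed is false. On $X=\nat$ with $\Ascr$ the algebra of finite-or-cofinite sets, setting $\mu(A)=0$ for finite $A$ and $\mu(A)=1$ for cofinite $A$ defines a finite content; but $\sigma(\Ascr)=\Pscr(\nat)$, and any measure extending $\mu$ would vanish on every singleton, hence on $\nat$ by $\sigma$-additivity, contradicting $\mu(\nat)=1$. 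The missing hypothesis is that $\mu$ be a \emph{pre-measure}, i.e.\ $\sigma$-additive on $\Ascr$ whenever a countable disjoint union of members of $\Ascr$ happens to lie in $\Ascr$ (for finite contents, equivalently: continuity from above at $\emptyset$); this is precisely what is consumed in your step (b), where $\mu(A)\le\mu^*(A)$ is established, and it cannot be dispensed with. With that hypothesis added, your steps (a)--(d) plus the $\pi$--$\lambda$ uniqueness argument constitute the standard and correct proof.
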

With a bit more effort, one can improve this extension theorem to finitely-additive set-func\-tions $\mu$ which are defined on \emph{semi-rings of sets}; a semi-ring $\Iscr$ is a family of subsets of $X$ such that $\emptyset\in\Iscr$, $R,S\in\Iscr\implies R\cap S\in\Iscr$ and $R\setminus S$ can we written as finite union of mutually disjoint elements of $\Iscr$. Archetypes of semi-rings are the families of half-open intervals $(a,b]$, $a<b$, or $d$-dimensional rectangles.

Every content which is initially defined on a semi-ring $\Iscr$ can be (constructively!) extended to the smallest algebra $\Ascr$ containing $\Iscr$, cf.\ \cite[p.\ 39, Step 2]{schilling-mims}, and then we are back in the setting of the theorem. This allows us to give, in dimension one, important examples of contents and measures, cf.\ Lemma~\ref{lemma-1} below.

\bigskip
There is a well-known trade-off between the complexity of the measure $\mu$ and the size of the $\sigma$-algebra $\Ascr$. This is best illustrated by an example. While we can define the Dirac and counting measures for \emph{all} sets, there is no way to define a geometrically sensible notion of `volume' (i.e.\ Lebesgue measure) for all sets -- if we accept the validity of the axiom of choice. (One can even show that the axiom of choice is equivalent to the existence of non-measurable sets, cf.\ \cite[p.~55]{cie89}.)

In dimensions $d=3$ and higher this is most impressively illustrated by the following
\begin{example}[Banach--Tarski paradox]\label{example-2}
    Think of the closed unit ball $B(0,1) = \{x\in\real^3 : |x| \leq 1\}$ as a three-dimensional pie. Then there exists a decomposition of $B(0,1)$ into finitely many pieces $E_1\cup\dots\cup E_n$ such that there are geometrically congruent pieces $F_1, \dots, F_n$ which can be re-assembled into a pie $B(0,2)$ of twice the original radius! Since congruent sets should have the same volume (Lebesgue measure), this construction is only possible if the sets $E_i$, hence the sets $F_i$, are non-measurable. For full details we refer to \cite{wagon}.
\end{example}

\section{One-dimensional cakes}\label{cakes}

If $X=\real$ or $X=[0,1]$, one usually takes the Borel $\sigma$-algebra, which is the smallest $\sigma$-algebra containing all open (or half-open) intervals. A content is often defined through its values on the half-open intervals
$$
    \Iscr = \{(a,b] : -\infty \leq a < b<\infty\}\cup\{\emptyset, \real\},
$$
cf.\ the remark following Theorem~\ref{th-cara}.

Let $\mu$ be a measure on the Borel $\sigma$-algebra $\Ascr$ in $X\subset\real$. A \emph{null set} for the measure $\mu$ is a set $N\in\Ascr$ such that $\mu(N)=0$. Recall that a finite measure $\mu$ is called \emph{absolutely continuous} with respect to Lebesgue measure, if every Lebesgue null set is also a $\mu$ null set; this is equivalent to the existence of an integrable density function $f\geq 0$ such that $\mu(B) = \int_B f(x)\,\mathrm{d}x$ for all $B\in\Ascr$; $\mu$ is said to be \emph{continuous} if its %right-continuous
\emph{distribution function} $F_\mu(x):=\mu(-\infty, x]$ is continuous; we use $\mu((-\infty,x]\cap I)$ if $\mu$ lives on an interval $I\subset\real$.
%Let us point out that $\sigma$-additivity of a measure $\mu$ implies that the distribution function $F_\mu$ of the measure $\mu$ is always continuous from the right.
A typical example of a continuous but not absolutely continuous measure is the measure which is induced by \emph{Cantor's function} (aka the devil's staircase), cf.\ Figure~1. Since $\mu\{x\} = F(x)-F(x-)=F(x)-\lim_{t\uparrow x}F(t)$, the notions of `continuity' and `atom-free' coincide in the one-dimensional setting.
\begin{figure}[h!]
    \includegraphics[width = .95\linewidth]{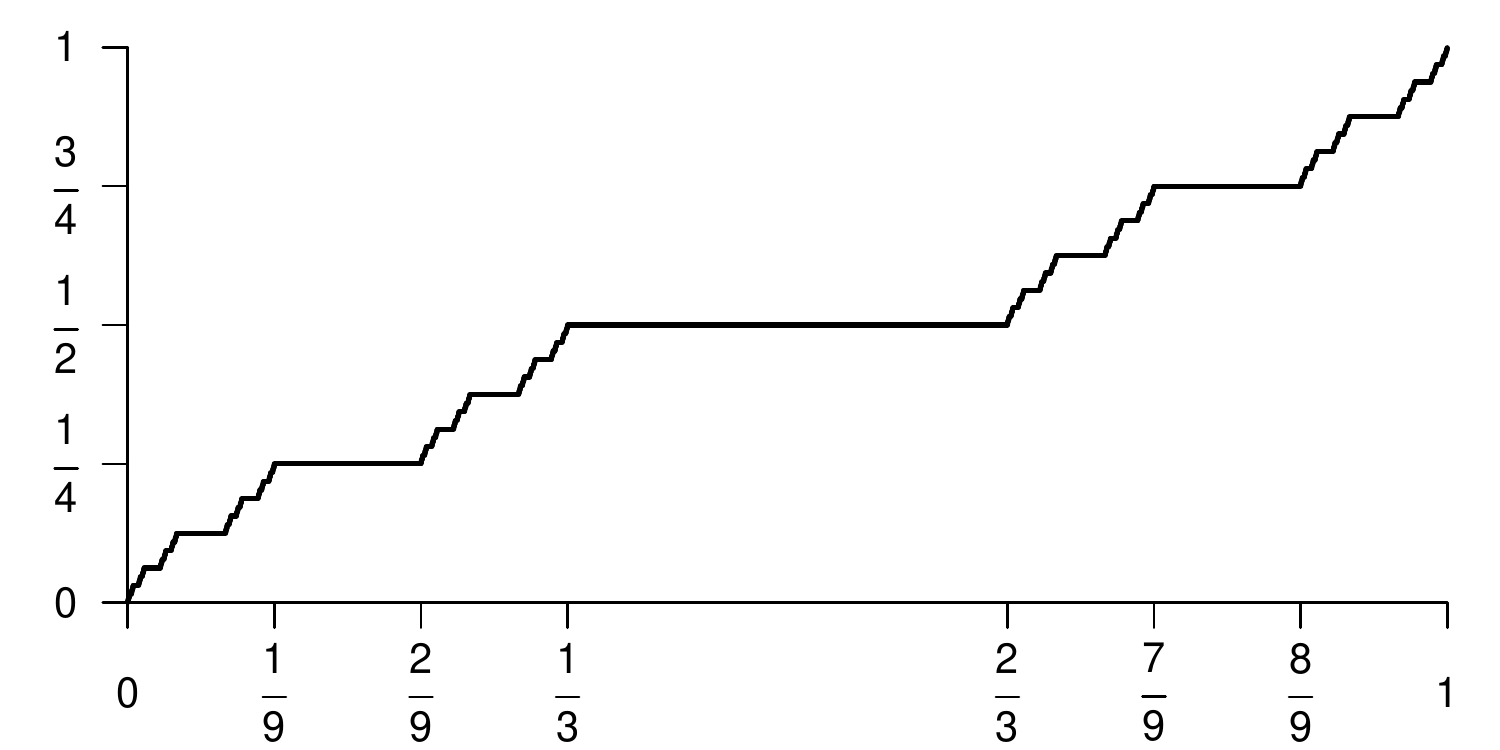}
    \caption{Let $C$ be Cantor's ternary set (i.e.\ the infinite triadic fractions $0.x_1x_2x_3\dots$ where $x_i\in\{0,2\}$. In order to enforce uniqueness, we identify $0.1$ with $0.0222\dots$ etc.) which is constructed by removing the open middle third from $[0,1]$ and then applying the same procedure to the remaining (disjoint) closed intervals \emph{ad infinitum}. The Cantor function takes the value $1/2$ on the initially removed middle third, $1/4$ and $3/4$ on the sets removed in the second step, and $i/2^n$, $i=1,3,5,\dots 2^n-1$ on the sets removed in the $n$th step. More formally, $f(x) = \sup\{\sum_{n=1}^\infty (x_n/2) 2^{-n} : x \geq 0.x_1x_2x_3\dots \in C\}$; the Cantor function is continuous, it increases exactly on $C$ and it is flat on $[0,1]\setminus C$.}
\end{figure}

The following lemma is well-known in measure theory, see e.g.\ \cite{schilling-mims,schilling-mint}, its second part describes the
relation between right-continuous distribution functions and probability measures. The first part (i) appears just as a part of the standard proof and is usually not explicitly spelled out; our point is to highlight the role of the right-continuity of the distribution function. Finally, (iii) means that a finitely-additive content with a (right-)continuous distribution function is already $\sigma$-additive.
\begin{lemma}\label{lemma-1}
    \textup{(i)} Let $\mu$ be a content defined on the system of all
    half-open intervals $\Iscr$ such that $\mu(\real) = 1$. The distribution function $F_\mu(x) := \mu(-\infty,x]$ is a monotonically increasing function such that $F_\mu(-\infty)=0$ and $F_\mu(+\infty)=1$.

    Conversely, every monotonically increasing function $F:\real\to [0,1]$ such that $F(-\infty)=0$ and $F(+\infty)=1$ defines a content on $\Iscr$ via $\mu(a,b] = F(b)-F(a)$ and $\mu(\real):=1$.

    \textup{(ii)} If $\mu$ is a measure, $F_\mu$ is, in addition, right-con\-tinu\-ous, i.e.\ $F_\mu(x) = F_\mu(x+) = \lim_{h\downarrow 0} F_\mu(x+h)$.

    \textup{(iii)} The content induced by a right-continuous, monotone increasing $F$ has a unique extension to a $\sigma$-additive measure.
\end{lemma}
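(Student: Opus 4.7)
The plan is to prove the three parts in order, with (i) and (ii) being essentially unpackings of the definitions from Section~\ref{meas} and (iii) being the substantive extension result.

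For part (i), forward direction, monotonicity of $F_\mu$ is immediate from subtractivity of contents: if $x\leq y$ then $(-\infty,y]=(-\infty,x]\sqcup(x,y]$, so finite additivity gives $F_\mu(y)-F_\mu(x)=\mu((x,y])\geq 0$. The boundary values follow from the splitting $\real=(-\infty,x]\sqcup(x,\infty)$, which together with $\mu(\real)=1$ forces $F_\mu(x)\in[0,1]$; the monotone limits $F_\mu(\pm\infty)$ then exist and equal the prescribed values. For the converse, I would set $\mu((a,b]):=F(b)-F(a)$ using the hypotheses $F(-\infty)=0$, $F(+\infty)=1$, extend by $\mu(\real):=1$, and verify finite additivity by telescoping after ordering the endpoints of any finite disjoint partition of a half-open interval. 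Part (ii) is then immediate from continuity from above of finite measures: for any sequence $h_n\downarrow 0$, the sets $(-\infty,x+h_n]$ decrease to $(-\infty,x]$, hence $F_\mu(x+h_n)\to F_\mu(x)$.

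For part (iii) the strategy is to invoke Theorem~\ref{th-cara}. I would first extend the content $\mu$ induced by $F$ from the semi-ring $\Iscr$ to the generated algebra of finite disjoint unions of half-open intervals, as described in the paragraph following Theorem~\ref{th-cara}. The theorem then produces a unique $\sigma$-additive extension to the Borel $\sigma$-algebra, provided the extended content is already $\sigma$-additive on that algebra. Via the standard equivalence, $\sigma$-additivity of a finite content is equivalent to continuity at $\emptyset$, and this in turn reduces to the countable subadditivity bound $F(b)-F(a)\leq\sum_n (F(b_n)-F(a_n))$ whenever $(a,b]\subset\bigcup_n (a_n,b_n]$ with $a$ finite; the reverse inequality is a consequence of finite additivity and monotonicity.

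The main obstacle is precisely this subadditivity estimate, and it is here that right-continuity of $F$ enters essentially. Given $\epsilon>0$, right-continuity supplies $\delta_n>0$ with $F(b_n+\delta_n)-F(b_n)<\epsilon\,2^{-n}$ and $\eta>0$ with $F(a+\eta)-F(a)<\epsilon$. The open intervals $(a_n,b_n+\delta_n)$ then form an open cover of the compact set $[a+\eta,b]$, so Heine-Borel extracts a finite subcover. Telescoping the increments of $F$ along that finite subcover and letting $\epsilon\to 0$ yields the required bound. Uniqueness of the extension is then immediate from the uniqueness clause in Theorem~\ref{th-cara}.
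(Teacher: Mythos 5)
The paper states Lemma~\ref{lemma-1} without proof, deferring to the standard references, and your argument is precisely the standard one found there: telescoping for finite additivity in (i), continuity from above of a finite measure for (ii), and for (iii) the reduction of $\sigma$-additivity to countable subadditivity on $\Iscr$, proved by fattening the covering intervals via right-continuity and extracting a finite subcover of a compact $[a+\eta,b]$ before invoking Carath\'eodory's Theorem~\ref{th-cara} --- you even correctly supply the premeasure hypothesis that the paper's statement of that theorem leaves implicit. The one loose end is your assertion that the monotone limits $F_\mu(\pm\infty)$ ``equal the prescribed values'': for a mere content on $\Iscr$ this does not follow from finite additivity, since $\real$ cannot be written as a finite disjoint union of bounded-above half-open intervals and so $\mu(\real)=1$ places no constraint on $\sup_x F_\mu(x)$; this imprecision, however, is inherited from the lemma's own formulation rather than introduced by your argument.
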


Lemma~\ref{lemma-1} shows, in particular, that every content $\mu$ such that $\mu(a,b]$ is a continuous function of the end-points of the interval $(a,b]$ can be uniquely extended to a $\sigma$-additive measure. This continuity assumption is frequently made in connection with cake-cutting, cf.\ \cite[p.\ 497]{barbanel-et-al}.

Let us now turn to cake-cutting. The following divisibility assumption is quite standard, cf.\ \cite[S.\ 398]{rothe}.
\begin{definition}\label{def-D}
    A measure or a content $\mu$ defined on $\Ascr$ has the property \textup{(D)}
    if for every $A\in\Ascr$ and $\alpha\in(0,1)$ there is a set $L_\alpha = L_\alpha(A)\in\Ascr$
    such that $L_\alpha\subset A$ and $\mu(L_\alpha) = \alpha\mu(A)$.
\end{definition}

Assume that $X=\real$ or $X=[0,1]$, and $\Ascr=\Bscr(X)$ is the system of Borel sets. It is clear that condition \textup{(D)} immediately entails that $\mu$ has no atoms, i.e.\ $\mu\{x\}=0$ for all $x\in\real$; indeed, the set $A=\{x\}$ admits only $\{x\}$ and $\emptyset$ as subsets, so $\mu\{x\}=0$ since $\mu$ is a finite set-function. The following lemma shows that the converse holds for a measure, whereas for a content this need not be the case, cf.\  Example~\ref{example-3} below.

\begin{lemma}\label{lemma-2}
    Every continuous \textup{(}but not necessarily absolutely continuous\textup{)} measure $\mu$ on the real line $(\real,\Bscr(\real))$ enjoys the property \textup{(D)}. In particular, \textup{(D)} is equivalent to $\mu$ having no atoms, i.e.\ the distribution function $F_\mu(x) = \mu(-\infty,x]$ is continuous.
\end{lemma}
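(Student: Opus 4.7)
The plan is to focus on the nontrivial implication that atom-freeness implies \textup{(D)}; the converse is already noted in the paragraph preceding the lemma, since a singleton $A=\{x_0\}$ with $\mu\{x_0\}>0$ has only $\emptyset$ and $\{x_0\}$ as Borel subsets, so no $L_\alpha$ with $\mu(L_\alpha)=\tfrac12\mu(A)$ can exist.

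For the forward implication, fix $A\in\Bscr(\real)$ (only the case $0<\mu(A)<\infty$ is interesting; in cake-cutting the cake has finite measure anyway) and $\alpha\in(0,1)$. I would introduce the auxiliary function
\[
    g(t) := \mu\bigl(A\cap(-\infty,t]\bigr),\qquad t\in\real,
\]
which is monotone non-decreasing with $g(-\infty)=0$ and $g(+\infty)=\mu(A)$. Once $g$ is shown to be continuous on $\real$, the intermediate value theorem supplies some $t^*$ with $g(t^*)=\alpha\mu(A)$, and then the Borel set $L_\alpha := A\cap(-\infty,t^*]$ is contained in $A$ and has the required measure.

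Continuity of $g$ is the only place where the hypothesis really enters. Right-continuity at any $t_0$ follows from continuity from above of the finite measure $\mu$, since $A\cap(-\infty,t_0+\tfrac1n]\downarrow A\cap(-\infty,t_0]$. Left-continuity reduces, via continuity from below applied to $A\cap(-\infty,t_0-\tfrac1n]\uparrow A\cap(-\infty,t_0)$, to the identity
\[
    g(t_0)-g(t_0-) = \mu\bigl(A\cap\{t_0\}\bigr),
\]
and this vanishes precisely because $\mu$ is atom-free: $\mu(A\cap\{t_0\})\leq\mu\{t_0\}=0$. The main (rather mild) obstacle is just to notice that this argument uses $\sigma$-additivity in an essential way both for continuity from above and from below, foreshadowing why the analogous statement for contents (cf.\ Example~\ref{example-3}) should be expected to fail.
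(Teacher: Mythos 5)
Your proof is correct and follows essentially the same route as the paper: both define the restricted distribution function $t\mapsto\mu(A\cap(-\infty,t])$, observe that atom-freeness makes it continuous, and apply the intermediate value theorem to obtain $L_\alpha=A\cap(-\infty,t^*]$, with the converse handled by singletons exactly as in the paragraph preceding the lemma. Your version merely spells out in more detail (via continuity from above and below) why the jump of $g$ at $t_0$ equals $\mu(A\cap\{t_0\})$, a point the paper asserts without elaboration.
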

\begin{proof}%[Proof of Lemma~\ref{lemma-2}]
    Fix a Borel set $B$ with $\mu(B)>0$ and set $F_B(x):=\mu(B\cap (-\infty,x])/\mu(B)$. Since $\mu$ has no atoms, $F_B(x)$ is continuous with range $[0,1]$. Given $\alpha\in (0,1)$ there is at least one point $x_\alpha\in\real$ such that $F(x_\alpha)=\alpha$. Clearly, $B_\alpha := B\cap (-\infty,x_\alpha]$ is a Borel subset of $B$ and $\mu(B_\alpha)=\alpha\mu(B)$.

    Since \textup{(D)} guarantees that $\mu\{x\}=0$ for all $x\in\real$, we see that \textup{(D)} is equivalent to $\mu$ having no atoms.
\end{proof}

Lemma~\ref{lemma-2} is wrong for a content $\mu$ which is defined on the system of all half-open intervals $\Iscr$ and with the property that the endpoints of the intervals have zero content. Note that this allows us to extend $\mu$ consistently to any interval $(a,b)$, $(a,b]$, $[a,b)$ or $[a,b]$ as they will have the same content.
\begin{example}\label{example-3}
    If $\mu$ is a finite content defined on $\Iscr$ with the additional property that it has no atoms, i.e.\ $\mu\{x\}=0$ for all $x\in\real$, then $\mu$ need not have the property \textup{(D)}. Take, for simplicity $X=[0,1]$ and consider the contents $\mu_F$ and $\mu_G$ induced by the following distribution functions
    \begin{align*}
        F(x)
        &=
        \begin{cases}
        \lambda x, &0\leq x< \frac 12\\%[\bigskipamount]
        \frac 12, & x=\frac 12\\%[\bigskipamount]
        \lambda x + 1-\lambda, & \frac 12<x\leq 1,
        \end{cases}
    \intertext{or}
        G(x)
        &=
        \begin{cases}
        0, &x=0\\%[\bigskipamount]
        (1-2\epsilon)x+\epsilon, & 0<x<1\\%[\bigskipamount]
        1, & x= 1.
        \end{cases}
    \end{align*}
    Since $F$ (and $G$) is not everywhere right-continuous, we \emph{cannot} conclude that for $\big(0,\frac 12\big) = \bigcup_{n=1}^\infty \big(0,\frac 12-\frac1n\big]$ the first equality (marked with `?') in the following identity holds:
    $$
        \mu_F(0,\tfrac 12) \omu{?}{=}{} \lim_n \mu_F\big(0,\tfrac 12 - \tfrac1n\big] = \lim_n F\big(\tfrac 12-\tfrac 1n\big)=F(\tfrac 12-).
    $$
    Thus, there is no intrinsic way to extend $\mu_F$ to all open intervals. On the other hand, setting $\mu_F\{x\}:= 0$ for all $x$ gives a content which is defined on all intervals, without leading to a contradiction.

    Let us extend $\mu_F$ and $\mu_G$ to all open, closed and half-open intervals by setting $\mu_F\{x\}=\mu_G\{x\}:= 0$.

\begin{figure}[h!]
    \includegraphics[width = .45\textwidth]{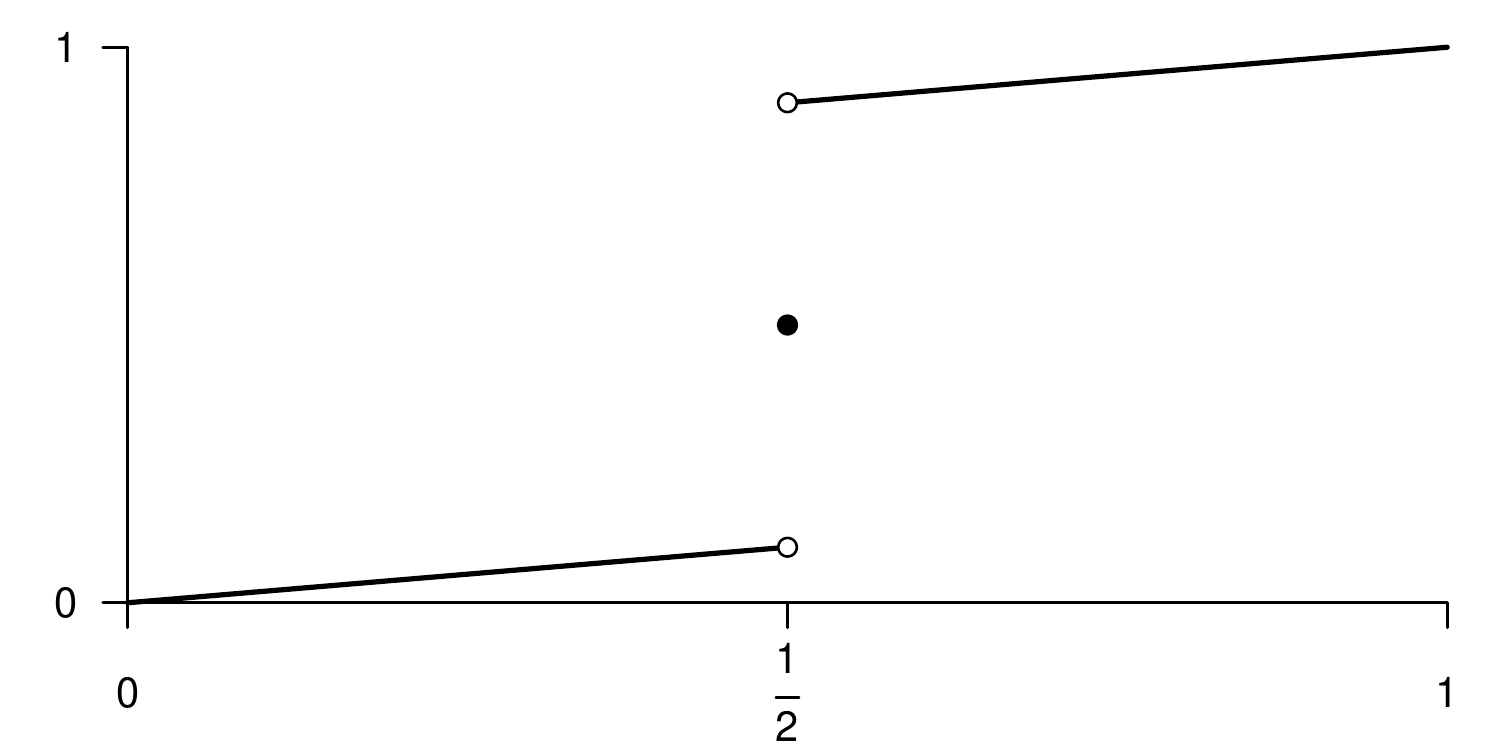}
    \hfill
    \includegraphics[width = .45\textwidth]{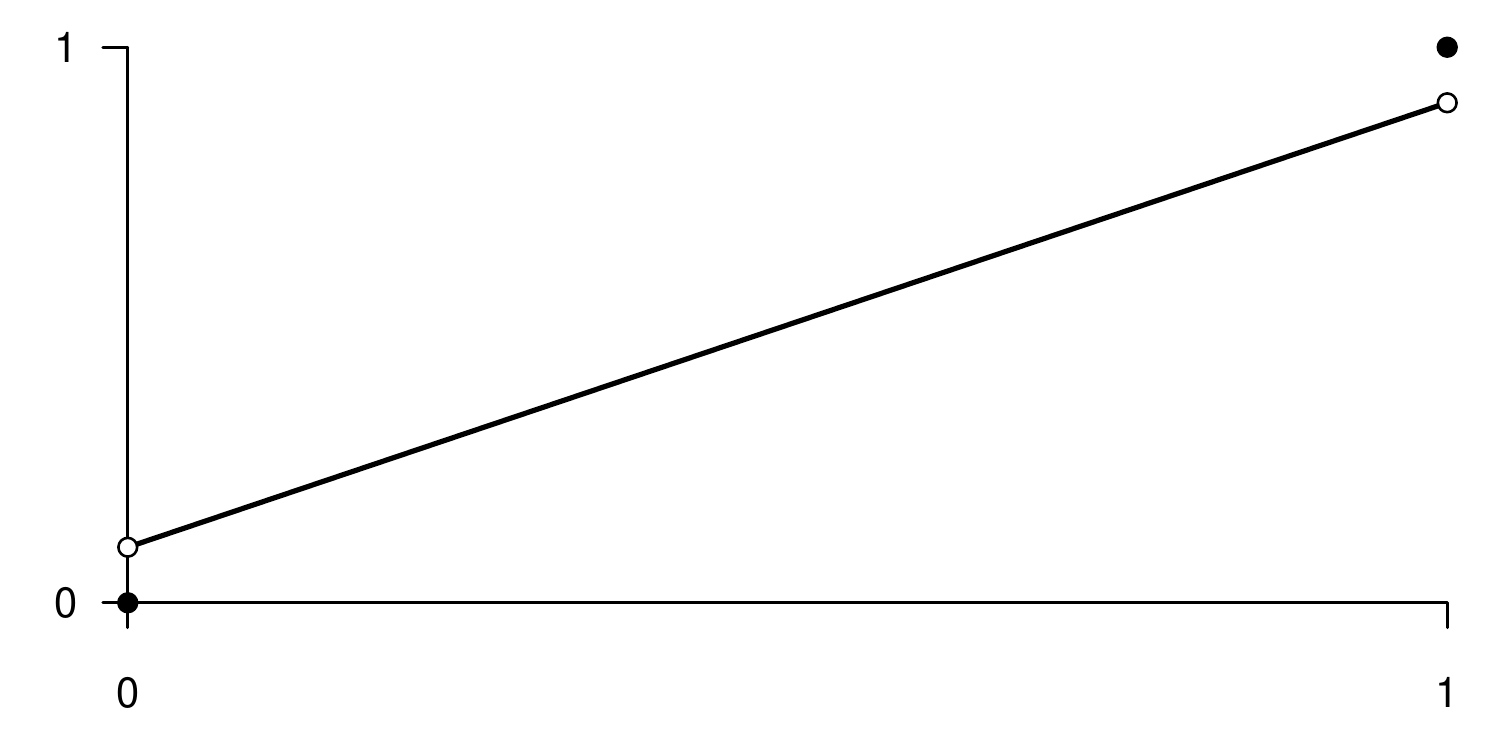}
\caption{The (cumulative) distribution functions $F$ (above) and $G$ (below).}
\end{figure}

    Take, in the first case, $\lambda= 0.1$;  then it is impossible to find a subset of $[0,1]$ of measure $0.2$. In the second case we take $\epsilon=0.05$. Then we can find sets of any size $\alpha < 1$, but, in general, they have to consist of at least two disjoint intervals, e.g.\ if $\alpha = 0.99$. However, both distribution functions can be used to define contents on the half-open intervals of $[0,1]$ such that all single points $\{x\}$ have zero content.
\end{example}

If one wants to include, in a sensible way, contents into
cake-cutting, one should replace \textup{(D)} by the assumption of continuity
of the (cumulative) distribution function of the underlying measure.
Then, alas, we are already in the situation of $\sigma$-additive
measures, see Lemma~\ref{lemma-1}, i.e.\ there is no real gain in
considering contents in dimension one.

\section{Abstract cakes and pies}\label{pies}

We are now going to consider abstract cakes, including `pies', i.e.\ $d=2$-di\-men\-sion\-al objects. Let us consider a general set $X$ (a `pie') equipped with an algebra or a $\sigma$-algebra $\Ascr$ of subsets of $X$. In order to assign a size to the pieces of pie $A\in\Ascr$, we use a content or a measure $\mu$ defined on $\Ascr$. If $\mu(X)<\infty$, we say that $\mu$ is \emph{finite}. Let us first consider contents. The following definition is a modification of the usual divisibility property \textup{(D)} from cake-cutting \cite[p.~398]{rothe}, which is suitable for contents; our modification takes into account the lack of countable additivity.

\begin{definition}\label{def-DD}
    A content $\mu$ defined on $\Ascr$ has the property \textup{(DD)} if for every $A\in\Ascr$ and $\alpha\in(0,1)$ there is an increasing sequence of sets $B^1_\alpha  \subset B^2_\alpha \subset B^3_\alpha \subset \dots$, $B^n_\alpha \in\Ascr$, such that $B^n_\alpha\subset A$ and $\sup_{n\in\nat}\mu(B^n_\alpha) = \alpha\mu(A)$.
\end{definition}
If $\mu$ is a measure on a $\sigma$-algebra $\Ascr$, then $B_\alpha := \bigcup_{n=1}^\infty B_\alpha^n$ is again in $\Ascr$, and by the continuity of measures we see that $\mu(B_\alpha) = \sup_{n\in\nat}\mu(B_\alpha^n)$, i.e.\ the properties \textup{(D)} and \textup{(DD)} are indeed equivalent.

Atoms exhibit, in some sense, the opposite behaviour to sets
enjoying the property \textup{(DD)}.
\begin{definition}\label{def-atom}
    Let $\Ascr$ be an algebra (or a $\sigma$-algebra) over the set $X$ and $\mu$ be a content (or a measure). A set $A\in\Ascr$ is an \emph{atom} if $\mu(A)>0$ and every $B\subset A$, $B\in\Ascr$, has measure $\mu(B)=\alpha\mu(A)$ with $\alpha=0$ or $\alpha=1$.\footnote{We use the convention that $0\cdot\infty = 0$.}
\end{definition}
If $A,B$ are atoms, then we have either $\mu(A\cap B)=0$ or $\mu(A\cap B)=\mu(A)=\mu(B)>0$; in the latter case, if $\mu(A\cap B)>0$, we call the atoms \emph{equivalent}. If $A$ and $B$ are non-equivalent, then $A$ and $B\setminus A$ are still non-equivalent and disjoint. Iterating this procedure, we can always assume that countably many non-equivalent atoms $(A_n)_{n\in\nat}$ are disjoint: just replace the atoms by $$A_1, A_2\setminus A_1,\dots, A_{n+1}\setminus\bigcup_{i=1}^n A_i, \dots.$$

Clearly, a finite content or measure can have at most $n\mu(X)$ non-equivalent atoms such that $\mu(A)\geq \frac 1n$, hence a finite content has at most countably many atoms.

Comparing Definition~\ref{def-DD} with Definition~\ref{def-atom} of an atom it is clear that \textup{(DD)} implies that $\mu$ has no atoms.

\begin{definition}\label{def-slice}
    Let $\mu$ be a finite content (or measure) on the algebra (or $\sigma$-algebra) $\Ascr$ over $X$. The set-function $\mu$ is \emph{sliceable}, if for any $\epsilon>0$ there are finitely many disjoint sets $B_i\in\Ascr$, $i=1,\dots, n=n(\epsilon)$, such that $0<\mu(B_i)\leq\epsilon$ and $X=B_1\cup\dots\cup B_n$ .

    A set $B\in\Ascr$ is \emph{$\mu$-sliceable}, if the set-function $A\mapsto\mu(A\cap B)$ is sliceable.
\end{definition}

\begin{lemma}\label{lemma-3}
    Let $\mu$ be a finite content on the algebra $\Ascr$ over the set $X$. If $\mu$ has no atoms, then it is sliceable.
\end{lemma}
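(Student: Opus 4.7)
The plan is to reduce sliceability to an \emph{infinitesimal subset lemma}: for every $A\in\Ascr$ with $\mu(A)>0$ and every $\eta>0$ there exists $B\in\Ascr$ with $B\subset A$ and $0<\mu(B)<\eta$. This lemma follows by iterated halving from the no-atoms hypothesis. Since $A$ is not an atom, there is $B_1\subset A$, $B_1\in\Ascr$, with $0<\mu(B_1)<\mu(A)$; replacing $B_1$ by $A\setminus B_1$ if necessary, one may assume $\mu(B_1)\leq\mu(A)/2$. Since $B_1$ is itself not an atom, iterating the argument inside $B_1$ produces a nested sequence $B_n\subset\dots\subset B_1\subset A$ with $\mu(B_n)\leq\mu(A)/2^n$, and one stops as soon as $\mu(B_n)<\eta$.

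Fix $\epsilon>0$ and consider
$$
s:=\sup\Big\{\sum_{i=1}^{k}\mu(B_i):B_1,\dots,B_k\in\Ascr\text{ pairwise disjoint},\ \mu(B_i)\leq\epsilon\Big\}.
$$
Clearly $s\leq\mu(X)$; I would argue that in fact $s=\mu(X)$. Granting this, pick a pairwise disjoint collection $\{B_1,\dots,B_k\}$ of $\epsilon$-small sets with $\sum_i\mu(B_i)>\mu(X)-\epsilon$; the remainder $R:=X\setminus\bigcup_iB_i$ lies in $\Ascr$ and has $\mu(R)<\epsilon$, so $R$ itself serves as an extra $\epsilon$-small piece (or, if $\mu(R)=0$, can be absorbed into $B_1$ without violating the bound). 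This yields the required finite partition.

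The main obstacle is the identity $s=\mu(X)$. Suppose for contradiction $s<\mu(X)$; then for a near-optimal disjoint collection the remainder $R$ has $\mu(R)\geq\mu(X)-s>0$, and the infinitesimal subset lemma yields some $C\subset R$ in $\Ascr$ with $0<\mu(C)\leq\epsilon$. Naively, adjoining $C$ to the collection should contradict the definition of $s$; the catch is that the infinitesimal lemma only guarantees $\mu(C)$ arbitrarily small, so the gain $\mu(C)$ may fall short of the gap between $\sum_i\mu(B_i)$ and $s$. I would close this gap by greedy exhaustion inside $R$: recursively choose pairwise disjoint $C_1,C_2,\dots\subset R$ whose measures exceed half of the current supremum of admissible subset-measures in the shrinking leftover. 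Because $\sum_j\mu(C_j)\leq\mu(R)<\infty$, these residual suprema must tend to zero; combining this with the atomless hypothesis and the infinitesimal lemma forces the leftover to eventually become $\epsilon$-small itself. Adjoining that leftover to the $C_j$'s (and the original $B_i$'s) then produces a finite disjoint collection of $\epsilon$-small sets whose total measure exceeds $s$, the desired contradiction.
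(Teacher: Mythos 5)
Your overall strategy coincides with the paper's: your ``infinitesimal subset lemma'' is the paper's Step~1 (and your iterated-halving proof of it is actually cleaner than the argument given there), and your greedy exhaustion with the ``exceed half of the current supremum'' selection rule is exactly the paper's Step~2; the detour through the supremum $s$ is only a presentational difference.

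The gap is in your final sentence. Write $R_n$ for the leftover after $n$ greedy steps and $c(Y):=\sup\{\mu(F): F\in\Ascr,\ F\subset Y,\ 0<\mu(F)\leq\epsilon\}$. Your construction does give $c(R_n)\leq 2\mu(C_{n+1})\to 0$, but the claim that this ``forces the leftover to eventually become $\epsilon$-small'' does not follow from ``the atomless hypothesis and the infinitesimal lemma''. The infinitesimal lemma yields only the qualitative statement $\mu(Y)>0\Rightarrow c(Y)>0$; to conclude you would need a quantitative lower bound of the form $c(Y)\geq h(\mu(Y),\epsilon)>0$ uniformly over $Y\in\Ascr$. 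Without it, $c(R_n)\to 0$ is perfectly compatible with $\mu(R_n)\geq\delta>0$ for every $n$: a priori every subset of $R_n$ of measure at most $\epsilon$ could already have measure at most $2\mu(C_{n+1})$, while all other subsets of $R_n$ have measure exceeding $\epsilon$, so that $R_n$ itself never becomes small. Because $\mu$ is merely finitely additive you also cannot pass to $\bigcap_n R_n$ and run a continuity argument there. This missing quantitative step is the entire content of the lemma --- everything preceding it holds for any finite content --- so the proof is not complete as written. (To be fair, the paper's own proof is extremely terse at exactly the same point, namely the assertion ``in particular, $\lim_{n}\mu(X\setminus\bigcup_{i=1}^{n}B_i)=0$''; you should supply that deduction explicitly rather than appeal to the infinitesimal lemma.)
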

\begin{proof}
    Fix $\epsilon>0$ and assume that $\mu(X)>\epsilon$, otherwise we are done.

    \medskip
    \emph{Step 1}:
    Let $Y\subset X$ be \emph{any} subset, and assume that there is some $B\subset Y$, $B\in\Ascr$, such that $\mu(B)>0$. Define
    $$
        \Fscr^Y :=\Fscr^Y_\epsilon := \{F\in\Ascr : F\subset Y, \: 0<\mu(F)\leq\epsilon\}.
    $$
    We claim that for the special choice $Y=B\in\Ascr$ the family $\Fscr^B$ is not empty.

    Since $X$ is not an atom, there is indeed a set $B\subset X$, $B\in\Ascr$, such that $0<\mu(B)<\mu(X)$, i.e.\ $\Fscr^B$ is well-defined. Similarly, there is some $F\subset B$, $F\in\Ascr$, with $0<\mu(F)<\mu(B)$.

    If $\mu(F)\leq\epsilon$, then $F\in\Fscr^B$, and we are done.

    If $\mu(F)>\epsilon$, we assume, to the contrary, that there is no subset $F'\subset F$, $F'\in\Ascr$, with $0<\mu(F')\leq\epsilon$. Since $F$ cannot be an atom, there is a subset $F'\subset F$ with $\epsilon<\mu(F')<\mu(F)$ and $\mu(F\setminus F')>\epsilon$. Iterating this with $F\rightsquigarrow F\setminus F'$ furnishes a sequence of disjoint sets $F_1=F',F_2,F_3,\dots$ with $\mu(F_i)>\epsilon$ for all $i\in\nat$. This is impossible since $\mu(F)<\infty$. So we can find some $F'\subset F\subset B$ with $0<\mu(F')\leq\epsilon$, i.e.\ $\Fscr^B$ is not empty.

    \medskip
    \emph{Step 2:}
    Define a(n obviously monotone) set-function $c(Y) := \sup_{C\in\Fscr^Y}\mu(C)$ for any $Y\subset X$; as usual $\sup\emptyset = 0$. Since $\Fscr^X$ is not empty, we can pick some $B_1\in\Fscr^X$ such that $\frac 12 c(X) < \mu(B_1) \leq\epsilon$.

    If $\mu(X\setminus B_1)\leq\epsilon$, we set $B_2 := X\setminus B_1$, otherwise we can pick some $B_2\in\Fscr^{X\setminus B_1}$ such that $\frac 12 c(X\setminus B_1) < \mu(B_2) \leq\epsilon$.

    In general, if $\mu(X\setminus (B_1\cup\dots\cup B_n))\leq\epsilon$, we set $B_{n+1}=X\setminus (B_1\cup\dots\cup B_n)$, otherwise we pick
    \begin{equation}\label{e-cond}\begin{gathered}
        B_{n+1}\in\Fscr^{X\setminus (B_1\cup\dots\cup B_n)}
        \quad\text{such that}\\
        \frac 12 c(X\setminus (B_1\cup \dots \cup B_n)) \leq \mu(B_{n+1})\leq \epsilon.
    \end{gathered}
    \end{equation}

    We are done, if our procedure stops after finitely many steps, otherwise we get a sequence of disjoint sets $B_1, B_2, \dots$ satisfying \eqref{e-cond}.
    Define $B_\infty := X\setminus\bigcup_n B_n$. This set need not be in $\Ascr$, but we still have, because of \eqref{e-cond},
    $$
        c(B_\infty)
        \leq c(X\setminus (B_1\cup\dots\cup B_m))
        \leq 2\mu(B_{n+1})
        \xrightarrow[n\to\infty]{}0
    $$
    since the series
    $$
        \sum_{n=1}^\infty \mu(B_n)
        = \sup_N\sum_{n=1}^N \mu(B_n)
        = \sup_N\mu(\bigcup_{n=1}^N B_n)
        \leq \mu(X)%<\infty
    $$
    converges.
    In particular, $\lim_{n\to\infty}\mu(X\setminus \bigcup_{i=1}^nB_i)=0$.

    Using again the convergence of the series $\sum_n \mu(B_n)$, we find some $N=N(\epsilon)$ such that $\sum_{N(\epsilon)< n\leq\infty}\mu(B_n)\leq\epsilon$, hence $B_1,B_2,\dots, B_N$ and $X\setminus\bigcup_{n=1}^N B_n$ is the slicing of $X$.
\end{proof}

\begin{lemma}\label{lemma-4}
    Let $\mu$ be a finite content without atoms on an algebra $\Ascr$ over the set $X$. Then
    it is sliceable and enjoys the property \textup{(DD)}.
\end{lemma}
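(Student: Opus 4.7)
The plan is to use Lemma~\ref{lemma-3} twice: once to deduce sliceability directly, and then, recursively, to build the increasing sequence required for \textup{(DD)} by \emph{greedily} adding thin slices of $A\setminus B^n_\alpha$ to the current approximant.

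Sliceability is immediate from Lemma~\ref{lemma-3}. For \textup{(DD)}, fix $A\in\Ascr$ with $\mu(A)>0$ (the case $\mu(A)=0$ being trivial) and $\alpha\in(0,1)$; I would set $B^0_\alpha:=\emptyset$ and proceed inductively. Given $B^n_\alpha\subset A$ with $\mu(B^n_\alpha)\leq\alpha\mu(A)$, consider the restricted set-function $\nu(E):=\mu(E\cap(A\setminus B^n_\alpha))$ on $\Ascr$. A quick check (see below) shows that $\nu$ is again a finite atom-free content, so Lemma~\ref{lemma-3} applied to $\nu$ with $\epsilon=1/(n+1)$ partitions $X$ into disjoint $E_1,\dots,E_{k_n}\in\Ascr$ with $0<\nu(E_i)\leq 1/(n+1)$. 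Setting $C_i:=E_i\cap(A\setminus B^n_\alpha)\in\Ascr$ then produces disjoint sets whose union is $A\setminus B^n_\alpha$ and which satisfy $0<\mu(C_i)\leq 1/(n+1)$. I would then run through $i=1,\dots,k_n$ in order, adding $C_i$ to the current set whenever this keeps the total measure $\leq\alpha\mu(A)$ and skipping it otherwise; let $B^{n+1}_\alpha$ be the final result.

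The containments $B^n_\alpha\subset B^{n+1}_\alpha\subset A$ and the bound $\mu(B^{n+1}_\alpha)\leq\alpha\mu(A)$ hold by construction. The crucial estimate is $\mu(B^{n+1}_\alpha)>\alpha\mu(A)-1/(n+1)$. To see it, note that at least one index $i^*$ must have been skipped, since otherwise $B^{n+1}_\alpha$ would equal $A$, contradicting $\mu(A)>\alpha\mu(A)$. At the moment $C_{i^*}$ was considered, the running set $D$ satisfied $\mu(D)+\mu(C_{i^*})>\alpha\mu(A)$, and because $D\subset B^{n+1}_\alpha$ this yields
\[
    \alpha\mu(A)-\mu(B^{n+1}_\alpha)\leq\alpha\mu(A)-\mu(D)<\mu(C_{i^*})\leq\tfrac{1}{n+1}.
\]
Hence the monotone sequence $\mu(B^n_\alpha)$ satisfies $\alpha\mu(A)-1/n<\mu(B^n_\alpha)\leq\alpha\mu(A)$, so $\sup_n\mu(B^n_\alpha)=\alpha\mu(A)$, as required.

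The only genuine subtlety I expect is verifying that $\nu$ inherits atom-freeness from $\mu$, since this is what permits Lemma~\ref{lemma-3} to be invoked inside the induction. This is a one-line check: if $E\in\Ascr$ has $\nu(E)>0$, then $E\cap(A\setminus B^n_\alpha)\in\Ascr$ has positive $\mu$-measure, so by atom-freeness of $\mu$ there is $F\subset E\cap(A\setminus B^n_\alpha)$ in $\Ascr$ with $0<\mu(F)<\mu(E\cap(A\setminus B^n_\alpha))$; then $F\subset E$ satisfies $0<\nu(F)<\nu(E)$, so $E$ is not an atom for $\nu$.
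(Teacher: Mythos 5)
Your proof is correct and follows essentially the same route as the paper's: sliceability comes from Lemma~\ref{lemma-3}, and \textup{(DD)} is obtained by repeatedly slicing the remainder $A\setminus B^n_\alpha$ into pieces of measure at most $1/(n+1)$ and greedily accumulating them without overshooting $\alpha\mu(A)$. Your explicit check that the restricted content $\nu$ inherits atom-freeness is a useful detail that the paper leaves implicit when it iterates on $X\setminus B_k$.
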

\begin{proof}
    Sliceability follows from Lemma~\ref{lemma-3}. Let us show that this implies condition \textup{(DD)}. Let $B\in\Ascr$ such that $\mu(B)>0$. Since the content $\mu_B(A):=\mu(A\cap B)/\mu(B)$ inherits the non-atomic property from $\mu$, it is clearly enough to show that for every $\alpha\in(0,1)$ there is an increasing sequence
    $$
        B^1_\alpha\subset B^{2}_\alpha\subset B^{3}_\alpha\subset\dots,
        \quad B^n_\alpha\in\Ascr \::\: \sup_{n\in\nat}\mu(B^n_\alpha) = \alpha,
    $$
    which is property \textup{(DD)} only relative to the full space $X$.

    Since $X$ is $\mu$-sliceable, there are mutually disjoint sets $C_1^{n}, \dots, C_N^{n}\in\Ascr$ where $N=N(n)$, $X=\bigcup_{i=1}^N C_i^{n}$ and $\mu(C_i^{n})<\frac 1n$.

    Let $k = \entier(1/\alpha)+1$ where $\entier(x)$ denotes the integer part of $x\in\real$.

    Set $B_k := C_1^{k}\cup\dots\cup C_{M(k)}^{k}$ where $M(k)\in \{1,\dots, N(k)\}$ is the unique number such that
    $$
        \sum_{i=1}^{M(k)}\mu(C_i^k)\leq\alpha <\sum_{i=1}^{M(k)+1}\mu(C_i^k)\leq \sum_{i=1}^{M(k)}\mu(C_i^k)+\frac 1k.
    $$

    Cy construction, $\mu(B_k)=\sum_{i=1}^{M(k)+1}\mu(C_i^k) > \alpha-\frac 1k$. Thus, we can iterate this procedure, considering $X\setminus B_k$ and constructing a set $D_{k+1}\subset X\setminus B_k$ which satisfies
    $$
        (\alpha-\mu(B_k)) \geq \mu(D_{k+1}) > (\alpha-\mu(B_k))-\frac 1{k+1}.
    $$
    For $B_{k+1}:= B_k\cup D_{k+1}$ we get $\alpha\geq \mu(B_{k+1})>\alpha-\frac 1{k+1}$.

    The sequence $B_{k+i}$, $i\in\nat$, satisfies $\mu(B_{k+1})\leq\alpha$, and so $\mu(B_{k+i})\uparrow\alpha$, i.e.\ $B_\alpha^n = B_{k+n}$ is the sequence of sets we need.
\end{proof}

If we collect all of the above results, we finally arrive at
\begin{theorem}\label{theo-conti}
    Let $\mu$ be a finite content on an algebra $\Ascr$ over $X$. The condition \textup{(DD)} is equivalent to `$\mu$ is sliceable' and `$\mu$ has no atoms'.
\end{theorem}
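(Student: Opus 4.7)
The plan is to show the three-way equivalence by combining the implications already established earlier in the paper with one small additional step. From the discussion immediately preceding Definition~\ref{def-slice}, it is clear that property \textup{(DD)} forces $\mu\{x\}=0$ for each candidate atom singleton, and more generally, if $A$ were an atom, then any increasing sequence $B_\alpha^n\subset A$ would force $\mu(B_\alpha^n)\in\{0,\mu(A)\}$ for each $n$, making $\sup_n\mu(B_\alpha^n)=\alpha\mu(A)$ impossible for $\alpha\in(0,1)$; hence \textup{(DD)} $\Rightarrow$ no atoms. Conversely, Lemma~\ref{lemma-3} gives no atoms $\Rightarrow$ sliceable, and Lemma~\ref{lemma-4} gives no atoms $\Rightarrow$ \textup{(DD)}.

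The only link still missing is sliceable $\Rightarrow$ no atoms, which I would prove by contraposition. Assume $A\in\Ascr$ is an atom with $\mu(A)=a>0$, and pick any $\epsilon\in(0,a)$. For any disjoint decomposition $X=B_1\cup\dots\cup B_n$ with $B_i\in\Ascr$ and $\mu(B_i)\leq\epsilon$, finite additivity yields
\begin{equation*}
    a = \mu(A) = \sum_{i=1}^n \mu(A\cap B_i).
\end{equation*}
Since each $A\cap B_i$ lies in $\Ascr$ and is a subset of the atom $A$, Definition~\ref{def-atom} forces $\mu(A\cap B_i)\in\{0,a\}$, so at least one index $i_0$ must satisfy $\mu(A\cap B_{i_0})=a$. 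Monotonicity then gives $\mu(B_{i_0})\geq a>\epsilon$, contradicting the slicing bound. Therefore no slicing at level $\epsilon<a$ exists, so $\mu$ is not sliceable.

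Putting the pieces together, \textup{(DD)} $\Rightarrow$ no atoms $\Rightarrow$ sliceable (Lemma~\ref{lemma-3}), sliceable $\Rightarrow$ no atoms (the step above), and no atoms $\Rightarrow$ \textup{(DD)} (Lemma~\ref{lemma-4}), closing the cycle. I don't anticipate any real obstacle: the only ingredient not already packaged in a lemma is the atom-versus-slicing argument, and it rests solely on finite additivity together with the definitional dichotomy for atoms. The one small point to be careful about is that the intersections $A\cap B_i$ must belong to $\Ascr$, which is immediate since $\Ascr$ is an algebra.
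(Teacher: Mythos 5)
Your proof is correct and follows essentially the same route as the paper, which simply assembles the preceding results: \textup{(DD)} implies `no atoms' by the remark after Definition~\ref{def-atom}, `no atoms' implies `sliceable' by Lemma~\ref{lemma-3}, and `no atoms' implies \textup{(DD)} by Lemma~\ref{lemma-4}. Your extra implication `sliceable $\Rightarrow$ no atoms' is not actually needed for the statement as worded (the right-hand side is the conjunction, which already contains `no atoms'), but the argument via $\mu(A)=\sum_{i=1}^n\mu(A\cap B_i)$ with each term in $\{0,\mu(A)\}$ is sound and yields the slightly stronger conclusion that \textup{(DD)}, `no atoms' and `sliceable' are pairwise equivalent for finite contents --- a point the paper leaves implicit.
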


Any measure $\mu$ on a $\sigma$-algebra is trivially a content on an algebra, i.e.\ Lemmas~\ref{lemma-3} and \ref{lemma-4} remain valid for measures. Since the properties \textup{(D)} and \textup{(DD)} are equivalent for measures defined on a $\sigma$-algebra, we immediately get
\begin{corollary}\label{cor-conti}
    Let $\mu$ be a finite measure on a $\sigma$-algebra $\Ascr$ over $X$. The condition \textup{(D)} is equivalent to `$\mu$ is sliceable' and `$\mu$ has no atoms'.
\end{corollary}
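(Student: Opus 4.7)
The plan is to deduce Corollary~\ref{cor-conti} as an immediate consequence of Theorem~\ref{theo-conti} together with the observation, already made just after Definition~\ref{def-DD}, that for a measure on a $\sigma$-algebra the properties \textup{(D)} and \textup{(DD)} coincide. Since a $\sigma$-algebra is in particular an algebra and a measure is in particular a content, Theorem~\ref{theo-conti} applies verbatim to $\mu$ and yields the equivalence of \textup{(DD)} with sliceability plus absence of atoms.

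First I would record the easy direction (D)$\implies$(DD): given $A\in\Ascr$ and $\alpha\in(0,1)$, pick the set $L_\alpha\subset A$ granted by (D) and take the constant sequence $B^n_\alpha := L_\alpha$. For the converse (DD)$\implies$(D), I would invoke continuity from below of the measure $\mu$: the increasing sequence $B^n_\alpha\in\Ascr$ has a union $B_\alpha := \bigcup_{n} B^n_\alpha\in\Ascr$ (this is where $\sigma$-additivity of $\Ascr$ is needed), and then
\[
    \mu(B_\alpha) \;=\; \sup_{n\in\nat}\mu(B^n_\alpha) \;=\; \alpha\mu(A),
\]
so $L_\alpha := B_\alpha$ witnesses (D). This settles (D)$\Leftrightarrow$(DD).

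Combining this with Theorem~\ref{theo-conti}, applied to $\mu$ viewed as a finite content on the algebra $\Ascr$, gives the desired chain
\[
    \textup{(D)} \;\Longleftrightarrow\; \textup{(DD)} \;\Longleftrightarrow\; \text{$\mu$ is sliceable and has no atoms},
\]
which is the statement of the corollary.

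There is no real obstacle here: the genuine mathematical content sits in Lemmas~\ref{lemma-3} and~\ref{lemma-4}, and the only step specific to the measure case is the appeal to continuity from below, which is standard. The one thing I would be careful to state explicitly is that the notions `sliceable' and `atom' coincide for $\mu$ regardless of whether we view $\Ascr$ as an algebra or a $\sigma$-algebra, so that Theorem~\ref{theo-conti} really transfers without modification.
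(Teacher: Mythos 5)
Your proposal is correct and follows exactly the paper's own route: the corollary is deduced from Theorem~\ref{theo-conti} together with the equivalence of \textup{(D)} and \textup{(DD)} for measures on a $\sigma$-algebra, which the paper establishes via the same continuity-from-below argument immediately after Definition~\ref{def-DD}. Nothing further is needed.
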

If $A_1, A_2, \dots$ are an enumeration of the non-equivalent atoms of the
measure $\mu$, then $A_\infty := X\setminus \bigcup_{n\in\nat}A_n\in\Ascr$ and
we can re-state Corollary~\ref{cor-conti} in the form of a decomposition theorem.
\begin{corollary}\label{cor-deco}
    Let $\mu$ be a finite measure on a $\sigma$-algebra $\Ascr$ over $X$. Then $X$ can be written as a disjoint union of a $\mu$-sliceable set $S$ and at most countably many atoms $A_1, A_2, \dots$.
\end{corollary}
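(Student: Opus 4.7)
The plan is to isolate the atomic part of $X$ by enumerating the equivalence classes of atoms and then show that the leftover set is sliceable via Lemma~\ref{lemma-3}. As noted just before Definition~\ref{def-slice}, the finiteness of $\mu$ forces at most $n\mu(X)$ non-equivalent atoms of mass $\geq 1/n$, so there are only countably many equivalence classes of atoms. I would pick a representative $A_n$ from each, obtaining a finite or countably infinite list, and then apply the disjointification procedure from the paragraph following Definition~\ref{def-atom}, replacing each $A_n$ by $A_n\setminus\bigcup_{i<n}A_i$. Because the original $A_n$ are pairwise non-equivalent, the subtracted pieces are $\mu$-null, so the replacements are pairwise disjoint and still atoms of the same $\mu$-mass.

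Next, I would set $S := X\setminus\bigcup_n A_n$, which lies in $\Ascr$ since $\Ascr$ is a $\sigma$-algebra. To show that $S$ is $\mu$-sliceable, I would verify that the restricted set-function $\nu(A):=\mu(A\cap S)$ is a finite atom-free measure on $\Ascr$; Lemma~\ref{lemma-3} then delivers sliceability of $\nu$, which is exactly the definition of $S$ being $\mu$-sliceable. Atom-freeness is argued by contradiction: if $B\in\Ascr$ were a $\nu$-atom, one may replace $B$ by $B\cap S$ and thus assume $B\subset S$; then $B$ is also an atom of $\mu$ itself, because every $\Ascr$-subset of $B$ lies in $S$ and hence carries identical $\mu$- and $\nu$-mass.

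Exhaustiveness of the enumeration would then force $B$ to be equivalent to some $A_n$, i.e.\ $\mu(B\cap A_n)=\mu(B)>0$; but $B\subset S$ and $A_n\cap S=\emptyset$ give $\mu(B\cap A_n)=0$, a contradiction, so $\nu$ is atom-free as required. The only real obstacle is the bookkeeping at the start: one has to check carefully that the disjointification step preserves atomicity (which rests on non-equivalence, i.e.\ $\mu(A_n\cap A_i)=0$ for $i\neq n$) and that the resulting list still exhausts every equivalence class. Once these two points are secured, the conclusion follows directly from Lemma~\ref{lemma-3} (equivalently, from Corollary~\ref{cor-conti}).
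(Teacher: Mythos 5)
Your proposal is correct and follows essentially the same route as the paper, which simply enumerates the (at most countably many) non-equivalent atoms, disjointifies them, and observes that the complement $S=X\setminus\bigcup_n A_n$ carries an atom-free restriction of $\mu$, hence is $\mu$-sliceable by Lemma~\ref{lemma-3} (equivalently Corollary~\ref{cor-conti}). The details you supply -- that disjointification preserves atomicity because non-equivalence makes the removed overlaps $\mu$-null, and that a $\mu$-atom inside $S$ would contradict the exhaustiveness of the enumeration -- are exactly the bookkeeping the paper leaves implicit.
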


\section{Conclusions}\label{conc}
The present paper has led to clear answers on the three questions posed in the introduction:
\begin{enumerate}
\item%[1.]
    The system of Borel sets is the correct system of subsets of $[0,1]$.

\item%[2.]
    Valuations should be considered as measures, not as contents. For one-dimensional cakes the divisibility assumption \textup{(D)}
    automatically entails $\sigma$-additivity.

\item%[3.]
    The divisibility assumption \textup{(D)} is equivalent to the assumption of non-existence of atoms. The assumption that there is a density function is a strictly stronger requirement.
\end{enumerate}

Depending on the intended audience we propose the following two versions to introduce the problem of cake-cutting.
\begin{quote}
\textbf{Version 1}: A cake is a one-dimensional heterogeneous good, represented by the unit interval $[0,1]$. Each of $n$ players has a personal value
function over the cake, characterised by a probability density function with respect to Lebesgue measure. This implies that players' preferences are $\sigma$-additive (i.e.\ countably additive) and non-atomic.
\end{quote}

For a more mathematically-minded audience, the following general version might be used.
\begin{quote}
\textbf{Version 2}:
We consider a cake which is represented by the interval $[0,1]$. A \emph{piece of cake} is a Borel subset of $[0,1]$. We will make the standard assumptions in cake-cutting. Each agent in the set of agents $N=\{1, \ldots n\}$ has his own valuation over Borel subsets of the interval $[0,1]$. The valuations are Borel measures, i.e.\ they are (i) \emph{defined on all Borel sets}; (ii) \emph{non-negative}: $V_i(X) \ge 0$ for all Borel subsets $X$ of $[0,1]$; (iii) \emph{$\sigma$-additive}: $V_i\left(\bigcup_{n=1}^\infty X_n\right) = \sum_{n=1}^\infty V_i(X_n)$ for all sequences of disjoint Borel sets $X_1, X_2, \ldots$; (iv) \emph{divisible} i.e.\ for every Borel set $X$ in $[0,1]$ and $0 \le \alpha \le 1$, there exists a Borel set $X' \subset X$ with $V_i(X') = \alpha V_i(X)$.
\end{quote}
Here it could be noted that (iv) is equivalent to having no atoms, i.e.\ the (cumulative) distribution function is continuous. The latter property, if combined with finite additivity, already entails $\sigma$-additivity (iii).

For readers not familiar with measure theory it could be noted that the system of all Borel sets is stable under countable repetition of all set-theoretic operations (union, intersection,  formation of complements) and that it contains all intervals, all open and all closed sets. There is, however, no constructive way to build a general Borel set starting from, say, the intervals.

\section*{Acknowledgement}
We are grateful to Steven J. Brams (NYU, New York), J\"{o}rg Rothe (U D\"{u}sseldorf) and Walter Stromquist (Swarthmore College, Swarthmore, PA) for valuable discussions and suggestions on the topic of this paper; Bj\"{o}rn B\"{o}ttcher (TU Dresden) helped us to produce Figures~1 and 2, and proofread the whole paper.

%\enlargethispage\bigskipamount

\end{document}